\newtheorem{lemma}{Lemma}
\newtheorem{proposition}{Proposition}
\newtheorem{corollary}{Corollary}
\newtheorem{fact}{Fact}
\newtheorem{remark}{Remark}
\newtheorem{assumption}{Assumption}
\def\begcen{\begin{center}}
\def\endcen{\end{center}}
\newcommand{\bfq}{\mbox{$q$}}
\newcommand{\bfy}{\mbox{$y$}}
\newcommand{\col}{ \mbox{col} }
\def\caly{{\cal Y}}
\def\calg{{\cal G}}
\def\bfy{{\bf y}}
\def\tilthe{\tilde{\eta}}
\def\liminf{\lim_{t \to \infty}}
\def\L2{{\cal L}_2}
\def\L2e{{\cal L}_{2e}}
\def\rea{\mathbb{R}}
\def\adj{\mbox{adj}}
\def\tilthe{\tilde{\theta}}
\def\x{{x}}
\def\beal#1{\begin{align}{#1}\end{align}}
\def\begmat#1{\begin{bmatrix}#1\end{bmatrix}}
\def\qmx#1{\begin{bmatrix}#1\end{bmatrix}}
\def\begali#1{\begin{align}{#1}\end{align}}
\def\begalis#1{\begin{align*}{#1}\end{align*}}
\def\begequarr{\begin{eqnarray}}
\def\endequarr{\end{eqnarray}}
\def\begequarrs{\begin{eqnarray*}}
\def\endequarrs{\end{eqnarray*}}
\def\begarr{\begin{array}}
\def\endarr{\end{array}}
\def\begequ{\begin{equation}}
\def\endequ{\end{equation}}
\def\lab{\label}
\def\begdes{\begin{description}}
\def\enddes{\end{description}}
\def\begenu{\begin{enumerate}}
\def\begite{\begin{itemize}}
\def\endite{\end{itemize}}
\def\endenu{\end{enumerate}}
\def\lef[{\left[\begin{array}}
\def\rig]{\end{array}\right]}
\def\begcen{\begin{center}}
\def\endcen{\end{center}}
\def\begrem{\begin{remark}\rm}
\def\endrem{\end{remark}}
\def\begassum{\begin{assumption}}
\def\endassum{\end{assumption}}
\def\begassums{\begin{assumption*}}
\def\endassums{\end{assumption*}}
\def\begassu{\begin{ass}}
\def\endassu{\end{ass}}
\def\beglem{\begin{lemma}}
\def\endlem{\end{lemma}}
\def\begcor{\begin{corollary}}
\def\endcor{\end{corollary}}
\def\begfac{\begin{fact}}
\def\endfac{\end{fact}}
\def\TAC{{\it IEEE Trans. Automat. Contr.}}
\def\AUT{{\it Automatica}}
\def\SCL{{\it Systems and Control Letters}}
\def\liminf{\lim_{t \to \infty}}
\def\L2e{{\cal L}_{2e}}
\def\rea{\mathbb{R}}
\def\intnum{\mathbb{Z}}
\def\adj{\mbox{adj}}
\def\col{\mbox{col}}
\def\JPC{{\it Journal of Process Control}}
\def\TAC{{\it IEEE Trans. Automatic Control}}
\def\IJC{{\it International Journal of Control}}
\def\SCL{{\it Systems \and Control Letters}}
\def\AUT{{\it Automatica}}
\def\SIAM{{\it SIAM J. Control and Optimization}}
\def\IETGTD{{\it  IET Generation, Transmission \and Distribution}}
\def\ba{\begin{array}}
\def\ea{\end{array}}
\def\begsubequ{\begin{subequations}}
\def\endsubequ{\end{subequations}}
\def\bfthe{{\boldsymbol\theta}}
\def\bfq{{\bf Q}}
\def\bfe{{\bf e}}
\def\calw{{\cal W}}
\def\calk{{\cal K}}
\def\bfome{{\boldsymbol{\Omega}}}
\def\bfeps{{\boldsymbol \epsilon}}
\def\bfy{{\bf Y}}
\def\tilthe{\tilde{\bfthe}}
\title{An Adaptive Observer for Uncertain Linear Time-Varying Systems with Unknown Additive Perturbations}
\author{Anton Pyrkin, Alexey Bobtsov, Romeo Ortega and Alberto Isidori}
\begin{document}

\maketitle

\noindent {\bf Keywords}                         
Adaptive state observers; Disturbance rejection, Linear time-varying systems\\

\begin{abstract}                         
In this paper we are interested in the problem of {\em adaptive state observation} of linear time-varying (LTV) systems where the system and the input matrices depend on unknown time-varying parameters.  It is assumed that these parameters satisfy some known LTV dynamics, but with unknown initial conditions.  Moreover, the state equation is perturbed by an additive signal generated from an exosystem with uncertain constant parameters.  Our main contribution is to propose a {\em globally convergent} state observer that requires only a weak excitation assumption on  the system. 
\end{abstract}

%
\section{Introduction}
\lab{sec1}
%
In view of its wide practical application the  problem of disturbance cancellation, has been extensively studied in the literature. The interested reader is referred to \cite{YILBAS} for a recent, comprehensive review of the literature. A scenario that is now universally adopted to tackle this problem is the {\em output regulation} paradigm---formally articulated in \cite[Chapter 4]{ISIbook}---which treats in a unified framework the problems of output tracking and disturbance rejection. In this scenario it it is assumed that the disturbance to be rejected (or the output to be tracked) is generated by an exosystem that is an autonomous dynamical system. It is assumed that only part of the state is available for measurement, which constitutes the output signal. Consequently, a first task to be solved in the output regulation problem is the one of state observation.

In this paper we adopt the mathematical formulation of output regulation addressing the problem of {\em adaptive state observation} of LTV systems where the system and the input matrices depend on unknown time-varying parameters and the state equation is perturbed by an {\em additive signal} generated from an exosystem with uncertain constant parameters.\footnote{See  \cite{DOSetalbook} for an extensive survey on the problem of state observation of LTV systems.} More precisely, in the paper we assume the following conditions.
\begenu[{\bf C1}]
\item The system is LTV with unknown time-varying parameters.
\item These parameters satisfy an autonomous LTV system with known dynamics but unknown initial conditions.
\item The measurable output is only one of the components of the state vector.
\item The disturbance to be rejected is generated by an exosystem whose system matrix depends on a vector of unknown constant parameters.  
\item The objective is to design an adaptive observer that will reconstruct the system state and all the unknown parameters.  
\endenu 

The problem formulation adopted in the paper is very similar to the estimation part of the work \cite{YILBAS}.\footnote{It should be clarified that in  \cite{YILBAS}, besides the estimation of the disturbance, a step of control via adaptive backstepping is added, while in this paper we restrict ourselves to the observation part.} However, it is important to underscore that some critical assumptions \cite[Assumptions 1-4]{YILBAS} are avoided in the present paper. Specifically, in  \cite{YILBAS} it is assumed that
\begite
\item the plant is minimum-phase;
\item the sign of the high-frequency gain is known;
\item the relative degree of the plant and upperbound of the plant order are known;
\item the disturbance is the sum of a known amount of harmonics.
\endite
In the present work we are able to work with nonminimum-phase system. The high-frequency gain may be unknown. We assume only that the plant order is given, relative degree may be uncertain. Disturbance in the present work is arbitrary for a class of LTI exogenous systems.
The present work constitutes a non-trivial extension of our previous research, e.g. \cite{PYRTDS,PYRTAC,PYRARC}, where we addressed the problems of disturbance estimation and cancellation for a plant with input delay and partial measurement of the state. But in mentioned works we considered linear time invariant systems with known constant parameters and all uncertaities were localized in the disturbance. In this paper we consider the system with unknown time-varying parameters and uncertain disturbance.

To tackle the problem at hand we propose in this paper to use a new procedure to design state observers for state-affine systems, called parameter estimation-based observers (PEBO), first reported in \cite{ORTetalscl}.  The main novelty of PEBO is that the state observation problem is reformulated as a problem of {\em parameter estimation} of a regressor equation. A drawback of the PEBO proposed in  \cite{ORTetalscl} is that it involves a non-robust open-loop integration. This shortcoming was later removed in \cite{ORTetalaut} with the definition of the generalized (G)PEBO, where the properties of the {\em principal matrix solution}  of an unforced LTV system $\dot x(t) = A(t)x(t)$ are exploited. This novel technique has been successfully applied to reaction systems \cite{ORTetaljpc}, power systems \cite{LORetal}, systems with delayed measurements \cite{BOBetal} and distributed state estimation \cite{ORTNUNBOB}---see also \cite{PYRetalscl} for a related adaptive observer design for a class of nonlinear systems.  

In this paper we use the GPEBO technique to derive the regressor equation needed for the estimation of the state and the system parameters---including the parameters of the aforementioned exosystem. To estimate these parameters we use the dynamic regressor extension and mixing (DREM) parameter adaptation algorithm introduced in \cite{ARAetaltac} and later further elaborated in \cite{ORTetaltac}.\footnote{The interested readers are referred to the aforementioned papers for further details on GPEBO and DREM.} One central feature of DREM is that it generates, out of a $q$-dimensional regression equation, $q$ {\em scalar} regression equations. This property is essential in our problem since, as we will show below, the regressor equation associated to this problem is nonlinear and it consists of a linear and a nonlinear part, and we need to invoke DREM to concentrate our attention on the linear part of the regression equation. 

Combining GPEBO and DREM has already been explored by the authors using gradient estimators in \cite{BOBetalaut,KORetalscl}. An important contribution of the present work is that we propose to  use---instead of the gradient scheme---a {\em new least-squares (LS)-based} algorithm, referred in sequel as  [LS+DREM] estimator. The use of LS advocated in the paper removes the need to calculate a, computationally demanding state-transition matrix in the estimator. Moreover, the superior convergence properties of LS estimators, as opposed to gradient-based, are widely recognized \cite{LJUbook,RAOTOUbook}. In the paper  we show that the parameters of the regressor equation can be---globally and exponentially---estimated under the weak assumption of {\em interval excitation} (IE) \cite{KRERIE,TAObook} of the regressor. This should be contrasted with the highly demanding persistency of excitation required in classical (non-DREM-based) gradient and LS estimators  \cite{IOASUNbook,SASBODbook}. 

The remainder of the paper is organized as follows. In Section \ref{sec2} we present the problem formulation.  In Section \ref{sec3} we derive the regression equation used in the estimator, which is presented in Appendix B. The main result is given in Section   \ref{sec4}. In Section \ref{sec5} we summarize all the equations needed for the implementation of the adaptive observer. We wrap up the paper with concluding remarks in Section   \ref{sec6}. In Appendix A we give some preliminary Lemmata instrumental for the proof of our main results.\\

\noindent {\bf Notation.}  Given $n \in \intnum_{+},q \in \intnum_{+}$, $I_n$ is the $n \times n$ identity matrix and ${ 0}_{n\times q}$  is an $n \times q$ matrix of zeros. For $x \in \rea^n$, we denote the square of the Euclidean norm as $|x|^2:=x^\top x$.  $\bfe_q \in \rea^n$ denotes the $q$-th vector of the $n$-dimensional Euclidean basis. Given a smooth signal $z:\rea_+ \to \rea$ we denote $[z(t)]^{(i)}=:{d^i z(t)\over dt^i}$, for $i \in \intnum_{+}$. 
%
\section{Problem Formulation}
\lab{sec2}
%
In this paper we are interested in the problem of {\em adaptive state observation} of  the following uncertain, additively perturbed,  LTV system 
\begali{
	\nonumber
	\dot x(t) &=[A + \theta(t)\bfe^\top_1] x(t)+B(t)u(t)+\bfe_n\delta(t),\\
	\lab{plant}
		y(t) &= \bfe^\top_1 x(t),
}
with state $x(t)\in\rea^n$, input $u(t)\in\rea$ and measurable output $y(t)\in\rea$, where 
$$
A:=\begmat{0_{(n-1)\times 1} & & I_{n-1}\\&0_{1 \times n} & }.
$$
It is assumed that  the unknown time-varying vectors $\theta(t)\in\rea^n$ and $B(t)\in\rea^n$ satisfy the LTV dynamics
\begali{
\nonumber
\dot x_\theta(t)&=A_\theta(t)x_\theta(t), \\
\nonumber
\dot x_B(t)&=A_B(t)x_B(t),\\
\nonumber
\theta(t)&=h_\theta x_\theta(t), \\
\lab{ltvsys}
B(t)&=h_Bx_B(t), 
}
with $x_\theta(t) \in\rea^{n_\theta}$, $x_B(t) \in\rea^{n_B}$, and the matrices $h_\theta \in\rea^{n \times n_\theta}$, $h_B \in\rea^{n \times n_B}$, $A_\theta(t) \in\rea^{n_\theta \times n_\theta}$, $A_B(t) \in\rea^{n_B \times n_B}$ are {\em known}. On the other hand,   the disturbance $\delta(t) \in \rea$ is the output of the exogenous system
\beal{
\nonumber
\dot w(t) &= S(\rho) w(t)\\
	\lab{delta}
\delta(t) &=h_\delta^\top w(t),
}
with $w(t)\in\rea^{n_w}$. It is assumed that the mapping $S:\rea^{n_\rho} \to \rea^{n_w \times n_w}$ is known, but the vectors of constant parameters $\rho\in\rea^{n_\rho}$ and $h_\delta\in\rea^{n_w}$ are {\em unknown}. Furthermore, without loss of generality \cite{ISIbook}, it is assumed that all the eigenvalues of $S(\rho)$ have {\em nonnegative} real part.

The objective is to reconstruct asymptotically (or in finite time) the vectors $\theta(t)$, $B(t)$, $\rho$, and the state $x(t)$.

\begrem
\lab{rem1}
It should be underscored that the initial conditions of all the differential equations that describe the systems dynamics are {\em unknown}.
\endrem 
%
\section{Reducing the Observation Problem to Estimation of Parameters}
\lab{sec3}
%
In this section we apply the GPEBO procedure \cite{ORTetalaut} to translate the problem of state observation to one of parameter estimation that will be combined with the task of estimation of the system parameters, to develop an adaptive observer.  As will be shown below, the regression equation that should be used for parameter estimation is {\em nonlinear} in the parameters. Indeed, it consists of the sum of a ``classical" linear regression equation (LRE) with {\em all} the unknown constant parameters and a term which depends nolinearly on these parameters. As it has been shown in \cite{ORTetalaut21}, thanks to the use of the DREM technique---that generates {\em scalar} LREs for separable nonlinearly parameterized regressions---it is possible to use for the parameter estimation only the first, linear part of  the regression equation and disregard the nonlinear part of it.    

\begin{proposition}
\lab{pro1}\em
There exists {\em measurable} signals $z(t)\in\rea^n$, $\bfome_L(t)\in\rea^s$, $\bfome_N(t)\in\rea^{n \cdot s}$, $\bfy(t)\in\rea$,  two known mappings  $\calw: \rea_+ \times \rea^s\to\rea^n$,   $\calk: \rea^s\to\rea^{n \cdot s}$ and a vector of {\em constant} parameters $\bfthe\in\rea^s$ such that the state of the system  \eqref{plant} can be written as
\begequ
\lab{xt}
x(t)=z(t)+\calw(t,\bfthe)+\bfeps_x(t)
\endequ
where $\bfthe$ satisfies the {\em separable}, nonlinearly parameterized regression equation
\begequ
\lab{lre}
\bfy(t)=\bfome_L^\top (t)\bfthe + \bfome_N^\top(t) \calk(\bfthe)  + \varepsilon(t)
\endequ
with $\bfeps_x(t)\in\rea^n$ and $\varepsilon(t)\in\rea$ exponentially decaying to zero and $s:=n+n_\theta+n_B$.
\end{proposition}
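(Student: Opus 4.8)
The plan is to follow the GPEBO methodology of \cite{ORTetalaut} step by step, treating the three sources of uncertainty — the time-varying parameter $\theta(t)$, the time-varying input gain $B(t)$, and the additive disturbance $\delta(t)$ — in a unified way by lumping them into a single augmented set of \emph{constant} unknowns. First I would observe that, by \eqref{ltvsys} and \eqref{delta}, each time-varying quantity is the linear output of an LTV (respectively LTI) autonomous system with \emph{known} dynamics and \emph{unknown} initial condition. Hence $\theta(t)=h_\theta \Phi_\theta(t)x_\theta(0)$, $B(t)=h_B\Phi_B(t)x_B(0)$ and $\delta(t)=h_\delta^\top\Phi_w(t,\rho)\,w(0)$, where $\Phi_\theta,\Phi_B$ are the (computable) state-transition matrices of $A_\theta,A_B$ and $\Phi_w$ is the matrix exponential $e^{S(\rho)t}$. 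The natural constant parameter vector is therefore built from $x_\theta(0)$, $x_B(0)$ together with the unknown portion of the plant's own initial state, which accounts for the dimension $s=n+n_\theta+n_B$; the disturbance parameters $\rho$ and $h_\delta$ enter only through the nonlinear term $\calk(\bfthe)$, which is exactly the reason the regression equation \eqref{lre} is nonlinearly parameterized rather than a plain LRE.

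Next I would construct the GPEBO state reparameterization. Introduce a Hurwitz copy of the chain-of-integrators dynamics, i.e. a filter $\dot\xi = (A-\ell\bfe_1^\top)\xi + \ell y + B(t)u$ driven by the measured output and the (known-structure, unknown-coefficient) input term, plus an auxiliary signal $\dot\Phi=(A-\ell\bfe_1^\top)\Phi$ carrying the principal matrix solution; then the true state admits the decomposition $x(t)=\xi(t)+\Phi(t)\eta_0 + (\text{terms linear in the unknown initial conditions of the }x_\theta,x_B\text{ subsystems})$ up to an exponentially decaying error coming from the Hurwitz filter transient — this is the origin of $\bfeps_x(t)$. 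Collecting the measurable parts into $z(t)$ and the part that depends affinely on $\bfthe$ into $\calw(t,\bfthe)$ yields \eqref{xt}. Reading off the first component via $y=\bfe_1^\top x$ and then differentiating it $n$ times (or, equivalently, passing it through a stable linear operator to avoid pure differentiation), and substituting the exosystem expression for $\delta(t)$, produces a scalar equation whose left side $\bfy(t)$ is measurable and whose right side splits into a part linear in $\bfthe$ — with regressor $\bfome_L(t)$ assembled from $z$, $\Phi$, the transition matrices and their $y$-convolutions — and a part that is linear in a regressor $\bfome_N(t)$ but multiplies a \emph{nonlinear} function $\calk(\bfthe)$ of the unknowns, capturing the bilinear coupling between the disturbance amplitudes $h_\delta$ and the $\rho$-dependent mode shapes; the filter transients again contribute the exponentially decaying $\varepsilon(t)$.

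The main obstacle I anticipate is organizational rather than conceptual: bookkeeping the dimensions so that the linear block has exactly dimension $s=n+n_\theta+n_B$ while \emph{all} disturbance-related unknowns are quarantined inside $\calk(\bfthe)$, and verifying that the nonlinear part is genuinely of the ``separable'' form $\bfome_N^\top(t)\calk(\bfthe)$ — i.e. that $\rho$ appears only through known functions multiplied by regressors built from measured/generated signals, with no residual entanglement with the linear coordinates. A secondary technical point is to confirm that every auxiliary signal introduced ($\xi$, $\Phi$, the $A_\theta,A_B$-transition matrices and their convolutions with $y$ and $u$) is indeed \emph{measurable}, which holds because $A$, $\ell$, $h_\theta$, $h_B$, $A_\theta(\cdot)$, $A_B(\cdot)$ are all known, and because the Hurwitz choice of $A-\ell\bfe_1^\top$ guarantees boundedness of these filters. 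Once the decomposition is in place, the claimed exponential decay of $\bfeps_x$ and $\varepsilon$ is immediate from the Hurwitz property, and the identification of $s$ with $n+n_\theta+n_B$ follows by counting the unknown initial conditions that survive after the disturbance unknowns have been absorbed into $\calk(\bfthe)$.
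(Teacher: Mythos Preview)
There is a genuine gap: you have misidentified what $\bfthe$ contains and, relatedly, how the disturbance is handled. The $n$ in $s=n+n_\theta+n_B$ does \emph{not} come from the plant's initial state; in the paper $\bfthe=\col(x_{\theta_0},x_{B_0},\Gamma)$, where $\Gamma$ is the vector of coefficients of the characteristic polynomial of $S(\rho)$. Your plan to write $\delta(t)=h_\delta^\top e^{S(\rho)t}w(0)$ and push the disturbance unknowns into $\calk(\bfthe)$ cannot work: $\calk$ is a \emph{known} map of $\bfthe$ alone, so if $\rho,h_\delta,w(0)$ are not components of $\bfthe$ they cannot appear in $\calk(\bfthe)$; and pushing $e^{S(\rho)t}$ into $\bfome_N(t)$ instead fails because $\bfome_N$ must be measurable while $\rho$ is unknown. (Your filter $\dot\xi=(A-\ell\bfe_1^\top)\xi+\ell y+B(t)u$ has the same defect: $B(t)$ is unknown, so this filter is not implementable; the paper splits it into separate implementable filters $z,\Omega,P$ driven by $Ky$, $h_\theta\Phi_\theta y$, $h_B\Phi_B u$, with $x_{\theta_0},x_{B_0}$ appearing only as constant right factors.)

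The missing idea is an internal-model reparametrization of the disturbance. After the GPEBO step, the error $e=x-z-\Omega x_{\theta_0}-P x_{B_0}$ satisfies $\dot e=A_K e+\bfe_n h_\delta^\top w$; a Sylvester-equation argument (Lemma~1) gives $\bfe_1^\top e=\bfe_1^\top\Pi(\rho)w+\epsilon$ with $\epsilon$ exponentially decaying. Feeding this scalar into a companion-form Hurwitz filter $\dot\Psi=A_f\Psi+\bfe_n(\cdot)$ and applying Cayley--Hamilton to $S(\rho)$ (Lemma~2) shows that $\Psi$ \emph{also} satisfies $\dot\Psi=A_\Gamma\Psi+\bfe_n\varepsilon$ for a companion matrix $A_\Gamma$ with the same spectrum as $S(\rho)$. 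Comparing the last rows of these two descriptions of $\dot\Psi$ replaces the intractable $e^{S(\rho)t}$ by the finite constant vector $\Gamma$ multiplying \emph{measurable} signals $L(t),Q(t)$. The nonlinear part of \eqref{lre} is then the bilinear term $\Gamma^\top Q(t)\col(x_{\theta_0},x_{B_0})$, i.e.\ products $\bfthe_i\bfthe_j$ of entries already in $\bfthe$ --- this is precisely the separable structure claimed, and without this internal-model step your regression equation is not realizable.
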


\begin{proof}
Following the GPEBO construction we invoke \cite[Property 4.4]{RUGbook} to rewrite the vectors ${x_\theta}(t)$ and ${x_B}(t)$ of \eqref{ltvsys} as
\begalis{
{x_\theta}(t)&=\Phi_\theta(t){x_\theta}_0,\\
{x_B}(t)&=\Phi_B(t){x_B}_0,
}
${x_\theta}_0={x_\theta}(0)$ and ${x_B}_0={x_B}(0)$, and $\Phi_\theta(t)$, $\Phi_B(t)$ are the principal matrix solutions of the LTV systems \eqref{ltvsys}, which are the solutions of the auxiliary filters
\begali{
\nonumber
\dot\Phi_\theta(t)&=A_\theta(t)\Phi_\theta(t), \quad
\Phi_\theta(0)=I_{n_\theta},\\
\lab{con1}
\dot\Phi_B(t)&=A_B(t)\Phi_B(t), \quad
\Phi_B(0)=I_{n_B}.
}
Thus, the vectors $\theta(t)$ and $B(t)$ may be rewritten as
\begalis{
\theta(t) & =h_\theta \Phi_\theta(t) {x_\theta}_0 \\
B(t) &=h_B\Phi_B(t){x_B}_0.
}
Choose a  vector $K\in \rea^n$  such that the matrix   
\begequ
\lab{ak}
A_K:=\qmx{ & I_{n-1} \\
	-K & \\
	& 0_{1 \times n-1}}.
\endequ
is Hurwitz, and define the set of filters
\begali{
\nonumber
	\dot z(t) & = A_Kz(t)+Ky(t),\\
\nonumber
\dot\Omega(t) &=A_K\Omega(t)+h_\theta \Phi_\theta(t) y(t),\\
\lab{con2}
\dot P(t) &= A_K P(t)+h_B\Phi_B(t)u(t),
}
with states $z(t)\in\rea^{n}$, $\Omega(t)\in\rea^{n\times {n_\theta}}$ and $P(t)\in\rea^{n\times{n_B}}$.

Define the auxiliary variable
\beal{
\lab{e}
e(t):=x(t)-z(t)-\Omega(t)\,{x_\theta}_0-P(t)\,{x_B}_0
}
whose derivative is given by\footnote{To simplify the notation, whenever clear from the context, the time time argument will be omitted.}
\begalis{
\dot e(t) &=Ax+\theta y+Bu+\bfe_n\delta-[A_Kz+Ky]-[A_K\Omega +h_\theta\Phi_\theta y]{x_\theta}_0 -[A_KP+h_B\Phi_Bu]{x_B}_0,\\
&=Ax+h_\theta \Phi_\theta(t) {x_\theta}_0  y+h_B\Phi_B(t){x_B}_0 u+\bfe_n\delta-[A_Kz+Ky]-[A_K\Omega +h_\theta\Phi_\theta y]{x_\theta}_0 \nonumber\\
&\quad-[A_KP+h_B\Phi_Bu]{x_B}_0,\\
&=Ax +\bfe_n\delta-[A_Kz-K\bfe^\top_1  x ]-A_K\Omega {x_\theta}_0-A_KP{x_B}_0,\\
&=A_K ( x-z-\Omega\,{x_\theta}_0-P\,{x_B}_0)+\bfe_n\delta,
}
consequently
\begequ
\lab{dote}
\dot e(t) =A_K e(t) +\bfe_n h^\top_\delta w(t).
\endequ
Combining the equation above with \eqref{delta} we get
\begalis{
\dot w(t) &= S(\rho)w(t)\\
\dot e(t) &= A_Ke(t) +\bfe_n h^\top_\delta w(t).
}
We apply now to this cascaded system the transformation discussed in Lemma 1 of Appendix A. Towards this end, consider the Sylvester equation
\[
\Pi(\rho) S(\rho) =A_K \Pi(\rho) +\bfe_n h^\top_\delta,
\]
that, given the fact that the spectra of $S(\rho)$ and $A_K$ are disjoint, has a unique solution. We underscore the fact that that $\Pi (\rho)$ depends on the unknown vector $\rho$, hence it is unknown. Defining the signal 
$$
\epsilon(t):= e(t)-\Pi(\rho) w(t),
$$
we get the decoupled dynamics
\begali{
\nonumber
\dot w(t)  &= S(\rho)w(t) \\
\dot {\epsilon}(t)  &= A_K\epsilon(t). 
\lab{dotwdote}
}
Hence
\[
e(t) = \Pi (\rho)w(t) + \epsilon(t)
\]
in which the second term is exponentially decaying. Multiplying the equation above by $\bfe^\top_1$ we get
\begalis{
\bfe^\top_1 e(t)&=\bfe^\top_1  [ \Pi (\rho)w(t) + \epsilon(t)]\\
& =\bfe^\top_1  [x(t) -  z(t) - \Omega(t){x_\theta}_0- P(t){x_B}_0 ],
}
where we have used \eqref{e} to get the second identity. Defining the {\em measurable} signals
\begali{
\nonumber
\zeta(t) &:= y(t) - \bfe^\top_1  z(t), \\
\lab{con3}
\varphi(t) &:={ \begmat{ \Omega^\top(t) \bfe_1   \\  P^\top(t) \bfe_1  }},
}
we obtain the first key relationship
\beal{
	\lab{y_delta}
\zeta(t)=\varphi^\top(t)\begmat{{x_\theta}_0 \\{x_B}_0}+\bfe^\top_1   \Pi (\rho)w(t) + \bfe^\top_1 \epsilon(t).
}
{
In the next step we will reparameterize \eqref{y_delta} in a different way to represent the unknown term $ \Pi (\rho)w(t)$ in a {\em bona fide} regression equation form.  Towards this end, consider the auxiliary (not-realizable) filter with the state $\Psi(t)  \in  \rea^{n_w}$:
\beal{
	\lab{Psi}
\dot\Psi(t)=A_f\Psi(t)+\bfe_n  \left[\bfe^\top_1  \Pi (\rho) w(t) + \bfe^\top_1 \epsilon(t)\right],
}
where $A_f \in  \rea^{n_w \times n_w}$ is given as
\begequ
\lab{af}
A_f:=\begmat{0_{(n-1)\times 1} & & I_{n-1}\\& f^\top & },\qquad 
\endequ
for some constant vector $f \in \rea^n$ such that $A_f$ is a Hurwitz matrix.  Using \eqref{y_delta} the system \eqref{Psi} may be rewritten as
\beal{
	\lab{Psi1}
	\dot\Psi(t)=A_f\Psi(t)+\bfe_n \left[\zeta(t)-\varphi^\top(t)\begmat{{x_\theta}_0 \\{x_B}_0}\right].
}
On the other hand,  invoking Lemma 2 of Appendix A, the system \eqref{Psi} may be rewritten as
\beal{
	\lab{Psi2}
	\dot\Psi(t)=A_\Gamma \Psi(t)+\bfe_n \varepsilon(t).
}
where $\varepsilon(t)$ is exponentially decaying and $A_\Gamma$ is given as
\begequ
\lab{agam}
A_\Gamma:=\begmat{0_{(n-1)\times 1} & & I_{n-1}\\& \Gamma^\top & },
\endequ
for some constant vector $\Gamma \in \rea^n$ such that   the spectra of $A_\Gamma $ and $S$ {\em coincide}. Comparing \eqref{Psi1} and \eqref{Psi2}, and taking into account the special form of matrices the $A_f$ and $A_\Gamma $, one can deduce that the last rows in the right hand sides of \eqref{Psi1} and \eqref{Psi2} are equal, that is,
\beal{
\lab{LRE_0}
f^\top\Psi(t)+\zeta(t)-\varphi^\top(t)\begmat{{x_\theta}_0 \\{x_B}_0}=\Gamma ^\top\Psi(t)+\varepsilon(t).
}

To remove the unknown vector $\Gamma$ from the identity above we introduce the {\em realizable} filters with states $L\in\rea^{n_w}$ and $Q\in\rea^{n\times(n_\theta+n_B)}$, 
\begali{
\nonumber
	\dot L(t) &= A_f L(t)+\bfe_n \zeta(t),\\
\lab{con4}
	\dot Q(t) & = A_f Q(t)+\bfe_n \varphi^\top(t)
}
and define the signal
\beal{
	\lab{E}
	E(t):=\Psi(t)-L(t)+Q(t)\begmat{{x_\theta}_0 \\{x_B}_0},
}
which, in fact, is exponentially decaying since it satisfies 
\begalis{
	\dot E(t) &= A_f\Psi+\bfe_n \zeta-\bfe_n \varphi^\top\begmat{{x_\theta}_0 \\{x_B}_0}-A_f L-\bfe_n \zeta+A_fQ\begmat{{x_\theta}_0 \\{x_B}_0}+\bfe_n \varphi^\top\begmat{{x_\theta}_0 \\{x_B}_0}\\
	&=A_f\left[\Psi-L+Q\begmat{{x_\theta}_0 \\{x_B}_0}\right]\\
	&=A_f\, E(t).
}
From \eqref{E} we have
\beal{
	\lab{Psi_eq}
	\Psi(t)=L(t)-Q(t)\begmat{{x_\theta}_0 \\{x_B}_0}+E(t),
}
Substitution of \eqref{Psi_eq} into \eqref{LRE_0} yields the second key relationship
\beal{
	\lab{LRE_00}
	(f-\Gamma) ^\top\left[L(t)-Q(t)\begmat{{x_\theta}_0 \\{x_B}_0}+E(t)\right]+\zeta(t)-\varphi^\top(t)\begmat{{x_\theta}_0 \\{x_B}_0}=\varepsilon(t),
}
which one can rewrite in the nonlinear regression form \eqref{lre} with the definitions
\begali{
\nonumber
\bfy(t)&:=\zeta(t)+L^\top(t) f, \\
 \bfome_L(t)&:= \begmat{ Q^\top(t) f+\varphi(t)\\ L(t)},
\lab{con5}
}
with the unknown parameter vector
\begequ
\lab{bfthe}
 \bfthe:= \begmat{{x_\theta}_0 \\{x_B}_0\\ \Gamma },
\endequ
for the linear term and
$$
\bfome_N^\top(t) \calk(\bfthe):=-\Gamma^\top Q(t)\begmat{{x_\theta}_0 \\{x_B}_0}.
$$
for the quadratic one, which can be clearly written as a separable NLPR via
\begequ
\lab{sepnlpre}
\Gamma^\top Q(t)\begmat{{x_\theta}_0 \\{x_B}_0}=\sum_{i=1}^n\sum_{j=1}^{n_\theta+n_B}Q_{ij}(t)\bfthe_i\bfthe_j.
\endequ

To complete the proof we need prove the relation \eqref{xt}. Towards this end, we recall the dynamical model of $e(t)$ \eqref{dote}	and the key relation \eqref{y_delta}, that we repeat here for ease of reference as
\begalis{
\dot e(t)&=A_Ke(t)+\bfe_n h_\delta w(t),\\
\bfe^\top_1  e(t)&=\zeta(t)-\varphi^\top(t)\begmat{{x_\theta}_0 \\ {x_B}_0},
}
where $A_K$ is given in \eqref{ak}. Then, using the fact that $\bfe^\top_1  A_K^i \bfe_n\equiv 0$ for all $i=0, ..,n-2$, we can compute
\begalis{
	e_1(t) &= \bfe^\top_1  e(t), \\
	\dot e_1(t) & = \bfe^\top_1  \dot e(t) = \bfe^\top_1  A_K e(t) +\underbrace{\bfe^\top_1  \bfe_n}_{=0} h^\top_\delta w(t) = \bfe^\top_1  A_K e(t), \\
	\dot e_2(t) & = \bfe^\top_1  \ddot e(t) = \bfe^\top_1  A_K^2 e +\underbrace{\bfe^\top_1  A_K\bfe_n }_{=0}h^\top_\delta w(t)  = \bfe^\top_1  A_K^2 e(t), \\
	& \quad \vdots \nonumber\\
	e_1^{(n-1)}(t) & = \bfe^\top_1  e^{(n-1)}(t) = \bfe^\top_1  A_K^{n-1} e(t) +\underbrace{\bfe^\top_1  A_K^{n-2}\bfe_n }_{=0}h^\top_\delta w(t)  = \bfe^\top_1  A_K^{n-1} e(t), 
}
that we write compactly as
\begequ
\lab{e1}
\qmx{\bfe^\top_1   e (t)\\ \bfe^\top_1  \dot e(t) \\ \vdots \\ \bfe^\top_1  e^{(n-1)}(t)}=\qmx{\bfe^\top_1  \\ \bfe^\top_1  A_K \\ \vdots \\ \bfe^\top_1  A_K^{n-1}} e(t)
\endequ

In the sequel we will express the left hand side vector of \eqref{e1} in an alternative form.  For, we rearrange  \eqref{LRE_0} as
$$
\zeta(t)-\varphi^\top(t)\begmat{{x_\theta}_0 \\ {x_B}_0}=(\Gamma -f)^\top \Psi (t)+\varepsilon(t),
$$
which replaced in \eqref{y_delta} yields
\begequ
\lab{bfe1e}
\bfe^\top_1  e(t)=(\Gamma -f)^\top \Psi (t)+\varepsilon(t).
\endequ
Now, the dynamics of $\Psi$ and $\varepsilon$ are defined in  \eqref{Psi2} and \eqref{dotxi}, \eqref{vareps1}, respectively. For ease of reference we repeat them here as
\begalis{
	\dot\Psi(t)&=A_\Gamma \Psi(t)+\bfe_n \varepsilon(t),\\
		\dot\xi(t) &= F_c\,\xi(t). \\
		\varepsilon(t)&=h_\varepsilon \,\xi(t).
}
Using these equations and \eqref{bfe1e} we can compute the left hand side vector of \eqref{e1} as
\beal{
	\lab{edots}
	\qmx{\bfe^\top_1   e(t) \\ \bfe^\top_1  \dot e (t)\\ \vdots \\ \bfe^\top_1  e^{(n-1)}(t)}
	=\qmx{(\Gamma -f)^\top \Psi(t) +\varepsilon(t)\\ (\Gamma -f)^\top \dot\Psi(t)+\dot\varepsilon(t) \\ \vdots \\ (\Gamma -f)^\top \Psi^{(n-1)}(t)+\varepsilon^{(n-1)}(t)}
	=\qmx{(\Gamma -f)^\top  \\ (\Gamma -f)^\top A_\Gamma  \\ \vdots \\ (\Gamma -f)^\top A_\Gamma ^{n-1}}\Psi(t)+\bar\varepsilon(t)
}
where  $\bar\varepsilon(t)$ is a suitably defined exponentially decaying signal. Equating the right hand sides of \eqref{e1} and \eqref{edots} yields the identity
\beal{
	\lab{ePsi}
	\qmx{\bfe^\top_1  \\ \bfe^\top_1  A_K \\ \vdots \\ \bfe^\top_1  A_K^{n-1}} e(t)=\qmx{(\Gamma -f)^\top  \\ (\Gamma -f)^\top A_\Gamma  \\ \vdots \\ (\Gamma -f)^\top A_\Gamma ^{n-1}}\Psi(t)+\bar\varepsilon(t),
}
where we notice that the matrix premultiplying $e(t)$ is full rank. Using the latter fact, recalling   \eqref{e} and replacing it in  \eqref{ePsi} finally yields
\begalis{
x(t) &=\qmx{\bfe^\top_1  \\ \bfe^\top_1  A_K \\ \vdots \\ \bfe^\top_1  A_K^{n-1}}^{-1}\qmx{(\Gamma -f)^\top  \\ (\Gamma -f)^\top A_\Gamma  \\ \vdots \\ (\Gamma -f)^\top A_\Gamma ^{n-1}}\Psi(t)+\qmx{\bfe^\top_1  \\ \bfe^\top_1  A_K \\ \vdots \\ \bfe^\top_1  A_K^{n-1}}^{-1}\bar\varepsilon(t)+z(t)+\Omega(t)\,{x_\theta}_0+P(t)\,{x_B}_0\\
&=\qmx{\bfe^\top_1  \\ \bfe^\top_1  A_K \\ \vdots \\ \bfe^\top_1  A_K^{n-1}}^{-1}\qmx{(\Gamma -f)^\top  \\ (\Gamma -f)^\top A_\Gamma  \\ \vdots \\ (\Gamma -f)^\top A_\Gamma ^{n-1}}\left[L(t)-Q(t)\begmat{{x_\theta}_0 \\{x_B}_0}+E(t)\right]+\qmx{\bfe^\top_1  \\ \bfe^\top_1  A_K \\ \vdots \\ \bfe^\top_1  A_K^{n-1}}^{-1}\bar\varepsilon(t)\nonumber\\
&\quad+z(t)+\Omega(t)\,{x_\theta}_0+P(t)\,{x_B}_0,
}
where have used  \eqref{Psi_eq} to replace $\Psi(t)$ in the second identity.  The proof of the claim  \eqref{xt} is completed from the equation above defining
\begalis{
\calw(t,\bfthe)&:=\qmx{\bfe^\top_1  \\ \bfe^\top_1  A_K \\ \vdots \\ \bfe^\top_1  A_K^{n-1}}^{-1}\qmx{(\Gamma -f)^\top  \\ (\Gamma -f)^\top A_\Gamma  \\ \vdots \\ (\Gamma -f)^\top A_\Gamma ^{n-1}}\,\left\{L(t)-\left[Q(t)-\begmat{\Omega(t) & P(t)}\right]\qmx{x_{\theta_0}\\x_{B_0}}\right\},\\
\bfeps_x(t)&:=\qmx{\bfe^\top_1  \\ \bfe^\top_1  A_K \\ \vdots \\ \bfe^\top_1  A_K^{n-1}}^{-1}\qmx{(\Gamma -f)^\top  \\ (\Gamma -f)^\top A_\Gamma  \\ \vdots \\ (\Gamma -f)^\top A_\Gamma ^{n-1}}\,E(t)+\qmx{\bfe^\top_1  \\ \bfe^\top_1  A_K \\ \vdots \\ \bfe^\top_1  A_K^{n-1}}^{-1}\bar\varepsilon(t).
}
}
\end{proof}
%
\section{Adaptive State Observer with LS+DREM Estimator}
\lab{sec4}
%
In this section we complete the design of the proposed adaptive observer combining---in the classical certainty-equivalent way---the results of Proposition \ref{pro1} with the new LS+DREM estimator proposed in Appendix B. More precisely, we apply the LS+DREM estimator of  Proposition \ref{pro3} to the NLPRE \eqref{lre} and replace the estimated parameters in the expression of the state \eqref{xt} to generate the observed state.

The first step is to identify from the NLPRE \eqref{lre} the function  $\bfome(t)$ and the mapping $\calg(\bfthe)$ of \eqref{nlpre}.\footnote{As usual in all adaptive designs \cite{IOASUNbook,SASBODbook}, without loss of generality, we disregard the presence of additive terms that exponentially converge to zero. See \cite[Subsection 4.3.7]{IOASUNbook} for the analysis of its effect.}  This follows trivially defining
\begali{
\nonumber
\bfome(t)& :=\begmat{ \bfome_L^\top (t)& \bfome_N^\top(t)}\in \rea^{n(s+1)}  \\ 
\calg(\bfthe)&:=\begmat{\bfthe \\ \calk(\bfthe)}.
\lab{deflsd}
}
In this case $\bfome(t)\in \rea^{n(s+1)}$ and $\calg:\rea^s \to \rea^{n(s+1)}$. 

It is easy to see that the monotonicity {\bf Assumption A2} is trivially satisfied with the matrix
\begequ
\lab{defbfq}
\bfq=\begmat{I_s& 0_{s \times ns}},
\endequ
which selects only the linear part of the NLPRE \eqref{lre}. 

We are in position to present the main result of the paper. 
	
\begin{proposition}\em
		\lab{pro2}
		Consider the LTV system (\ref{plant}) and the associated NLPRE  \eqref{lre}. Define the certainty equivalent adaptive observer
\begali{
\nonumber
\hat x(t)&= z(t)+ \calw(t,\hat \bfthe(t))\\
&= z(t)+\qmx{\bfe^\top_1  \\ \bfe^\top_1  A_K \\ \vdots \\ \bfe^\top_1  A_K^{n-1}}^{-1}\qmx{(\hat\Gamma(t) -f)^\top  \\ (\hat\Gamma(t) -f)^\top A_{\hat\Gamma}(t) \\ \vdots \\ (\hat\Gamma (t)-f)^\top A_{\hat\Gamma}^{n-1}(t)}\,\left\{L(t)-\left[Q(t)-\begmat{\Omega(t) & P(t)}\right]\qmx{\hat x_{\theta_0}(t)\\ \hat x_{B_0}(t)}\right\},
\lab{hatx}
}
where
\begalis{
{\hat A}_\Gamma(t) &:=\begmat{0_{(n-1)\times 1} & & I_{n-1}\\& \hat \Gamma^\top(t) & },
}
and the estimated parameters 
$$
\hat \bfthe(t)= \begmat{\hat x_{\theta_0}(t) \\\hat x_{B_0}(t)\\ \hat \Gamma(t) }
$$ 
are generated with the LS+DREM interlaced algorithm \eqref{intestt} with the definitions \eqref{deflsd} and \eqref{defbfq}. Assume the vector $\bfome$ satisfies {\bf Assumption  A3} of Appendix B and the system does not exhibit finite escape time. Then, for all initial conditions we have the following.\\

\noindent {\bf (i)} All signals remain \textit{bounded}.\\

\noindent {\bf (i)} The parameter errors satisfy \eqref{parcon}.\\

\noindent {\bf (i)} The state observation error $\tilde x(t):=\hat x(t)-x(t)$ satisfies
$$
\liminf|\tilde x(t)|=0. 
$$
\end{proposition}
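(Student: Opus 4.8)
The plan is to follow the standard certainty-equivalence argument, leaning entirely on the two results that have already been established: the representation \eqref{xt}--\eqref{lre} of Proposition \ref{pro1}, and the convergence properties of the LS+DREM estimator of Appendix B (Proposition \ref{pro3}). The key observation is that the state reconstruction error decomposes cleanly into a part governed by the parameter estimation error plus the exponentially decaying terms $\bfeps_x(t)$ and $\bar\varepsilon(t)$ inherited from Proposition \ref{pro1}. So everything reduces to controlling $\tilde\bfthe(t):=\hat\bfthe(t)-\bfthe$.

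First I would invoke Proposition \ref{pro3}: under {\bf Assumption A3} (the interval-excitation condition on $\bfome$) and absent finite escape time, the [LS+DREM] algorithm \eqref{intestt} keeps all its internal signals bounded and yields parameter errors obeying the bound \eqref{parcon}, in particular $\tilde\bfthe(t)\to 0$. This is exactly claim {\bf (ii)}; claim {\bf (i)} on boundedness then follows because $\hat x(t)$ in \eqref{hatx} is an explicit continuous (indeed polynomial-in-$\hat\Gamma$, affine-in-the-filter-states) function of the filter states $z,\Omega,P,L,Q$ and the bounded estimate $\hat\bfthe$; the filter states $z,\Omega,P,L,Q$ are outputs of the exponentially stable filters \eqref{con1}--\eqref{con5} driven by signals that are bounded whenever $x,u$ are, and the no-escape hypothesis rules out the only remaining pathology. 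Since $x(t)$ is expressed via \eqref{xt} in terms of these same bounded quantities, boundedness of the full state follows, and hence of $\tilde x = \hat x - x$.

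Next I would establish {\bf (iii)} by writing $\tilde x(t) = \calw(t,\hat\bfthe(t)) - \calw(t,\bfthe) - \bfeps_x(t)$, using \eqref{xt} and the definition of $\hat x$ in \eqref{hatx}. The second term is exponentially decaying by Proposition \ref{pro1}. For the first term, I would show that $\calw(t,\cdot)$ is locally Lipschitz in its second argument uniformly in $t$ on bounded sets — this is immediate from the explicit formula for $\calw$, which is a fixed matrix polynomial in the $\Gamma$-block of $\bfthe$ (whose coefficients are the bounded filter signals $L,Q,\Omega,P$) composed with an affine map in the $(x_{\theta_0},x_{B_0})$-block — so that $|\calw(t,\hat\bfthe(t))-\calw(t,\bfthe)| \le c(t)\,|\tilde\bfthe(t)|$ with $c(t)$ bounded along trajectories. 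Combining with $\tilde\bfthe(t)\to 0$ from \eqref{parcon} gives $\liminf|\tilde x(t)| = 0$, completing the proof.

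The main obstacle is the interplay between boundedness and convergence: Proposition \ref{pro3} presumably guarantees convergence only under the a priori assumption that the regressor $\bfome$ — which is built from the plant signals — is itself bounded and interval-exciting, whereas boundedness of those signals is part of what must be proved. The clean way around this is precisely the no-finite-escape-time hypothesis together with the fact that all the GPEBO filters \eqref{con1}--\eqref{con5} are BIBO-stable (Hurwitz $A_K$, $A_f$, and the $A_\theta,A_B$ dynamics entering only through $\Phi_\theta,\Phi_B$ which feed bounded-input filters): on any finite interval the signals cannot blow up, and the estimator's own boundedness is self-contained once its regressor is bounded on that interval, so a continuation argument closes the loop. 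I would make this explicit rather than gloss over it, since it is the one place where the argument is not purely mechanical.
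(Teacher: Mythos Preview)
Your proposal is correct and follows essentially the same certainty-equivalence route as the paper's own proof, which is in fact much terser: it simply observes that \eqref{hatx} is the certainty-equivalent version of \eqref{xt}, then invokes exponential convergence \eqref{parcon} from Proposition~\ref{pro3}, smoothness of all maps, and the no-finite-escape-time assumption. Your write-up supplies the details the paper omits (the Lipschitz bound on $\calw(t,\cdot)$, the decomposition $\tilde x = \calw(t,\hat\bfthe)-\calw(t,\bfthe)-\bfeps_x$, and the boundedness/excitation interplay), but the underlying argument is the same.
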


 \begin{proof}
As shown in Proposition \ref{pro1}, the state $x(t)$ satisfies  (up to additive exponentially decaying signals) the equation \eqref{xt}, hence \eqref{hatx} is a {\em certainty equivalent} version of it which is obtained replacing the unknown parameters by its current estimate. The proof then is a direct consequence of the {\em exponential} convergence of the parameter estimation errors  \eqref{parcon}, smoothness of all maps  and the assumption that there is no finite escape time. 
\end{proof}
%
\section{Summary of the Adaptive Observer Design}
\lab{sec5}
%
To facilitate its practical implementation in this section we give a summary of the proposed design.\\

\noindent {\bf Data:} $A_\theta(t),A_B(t),h_\theta,h_B$, $u(t)$ and $y(t)$.\\

\noindent {\bf Tuning parameters  for the LRE generator:} $K \in \rea^n,f \in \rea^{n_w}$ such that $A_K$ \eqref{ak} and $A_f$ \eqref{af} are Hurwitz.\\

\noindent {\bf Tuning parameters  for the parameter estimator:} Scalars $\alpha>0$, $f_0>0$ and $\gamma>0$.\\

\noindent {\bf State equations for the LRE generator:} \eqref{con1}, \eqref{con2}, \eqref{con4}.\\ 

\noindent {\bf State equations for the parameter estimator:} \eqref{intestt}.\\

\noindent {\bf Algebraic equations for the LRE generator:} \eqref{con3}.\\

\noindent {\bf Algebraic equations for the parameter estimator:} \eqref{con5}, \eqref{sepnlpre}, \eqref{deflsd}.

%
\section{Numerical example}
\lab{sec6}
%
In this section we will show an example, illustrating the procedure of parameterization of the plant model with time-varying parameters and disturbance generated by an exogeneous system. \\

\noindent {\bf Plant model.} Consider the plant \eqref{plant} with parameters \eqref{ltvsys}, where
$$
A=\qmx{0 & 1 \\
	0 & 0},\qquad
A_\theta=\qmx{-0.001 & 0 \\
	0 & -0.002},\qquad
h_\theta=\qmx{1 & 0 \\ 0 & 1},\qquad
x_\theta(0)=\qmx{-2 \\ -1},
$$
$$
A_B(t)=\qmx{0 & 1 \\
	-1+0.1\sin t & 0},\qquad
h_B=\qmx{1 & 0 \\ 0 & 1},\qquad
x_B(0)=\qmx{0.7 \\ 0.2},
$$
the control signal $u(t)=10+\sin(0.5t)$, 
and the disturbance \eqref{delta} with parameters
$$
S(\rho)=\qmx{0 & 1 \\ \rho & 0},\qquad
h_\delta=\qmx{1 & 0}, \qquad
w(0)=\qmx{-10 \\ 1}, \qquad
\rho =-1.
$$

\noindent  {\bf Parameterization.} We recall that the selection of the vector $\Gamma$ is such that the spectra of $A_\Gamma$ and $S(\rho)$ should coincide. In this examples the characteristic polynomial of $S(\rho)$ is $s^2-\rho$. Therefore, $\Gamma$ can be selected as $\Gamma=\rho\,\bfe_1$ to comply with this requirement. This implies that only one element of $\Gamma$ is unknown and in $\bfthe$ \eqref{bfthe} only one unknown parameter $\rho$ should be included and it takes the form 
$$
\bfthe=\qmx{x_\theta(0) \\ x_B(0) \\ \rho}=\qmx{-2 \\ -1 \\ 0.7 \\ 0.2 \\ -1}.
$$ 
Regarding the nonlinearly parameterized term \eqref{sepnlpre} it takes the form
$$
\Gamma^\top Q(t)\begmat{{x_\theta}_0 \\{x_B}_0}=-\bfe_1^\top Q(t) \qmx{\rho x_\theta(0) \\ \rho x_B(0)},
$$
hence
$$
\bfome_N(t) = -\bfe_1^\top Q(t),\quad {\cal K(\theta)}=\qmx{\rho\,x_\theta(0) \\ \rho\,x_B(0) }=\qmx{2 \\ 1 \\ -0.7 \\ -0.2}.
$$
The description of the regressor equation \eqref{lre} is completed with the definitions given in \eqref{lre}.\\

\noindent   {\bf Parameter estimator and observed state} The estimated parameters $\hat\bfthe$ are generated with the LS+DREM algorithm of Appendix B with the selection matrix $\bfq=\qmx{I_5 & | &0_{5\times 4}}$, which is chosen to retain only the linear part of the regression equation \eqref{lre}. The observation of the states is, finally, given by \eqref{hatx}.\\

\noindent   {\bf Simulation results} The tuning gains used in the simulation are given in the caption of the various figures. In Fig. \ref{fig_1} we show plots of the norm of the estimation error $| \bfthe-\hat\bfthe(t)|$ and the observation error $x(t)-\hat x(t)$. To evaluate the effect of the tuning parameters on the transient performance we repeated the previous simulation significantly changing the gains $f_0$ and $\alpha$. The results are given in  Fig. \ref{fig_2}, which clearly reveal the transient performance degradation---notice the difference in time scales.

Finally, to test the robustness of the adaptive observer to measurement noise we added to the plant output the signal depicted in Fig. \ref{fig_3}. The resulting simulations are given in Fig. \ref{fig_4}, showing very little degradation of the performance. Observing, however, a small steady state error.

 \begin{figure*}[h]
 	\vspace{0mm}
 	\subfloat[][Norm of the parameter estimation error $| \bfthe-\hat\bfthe(t)|$ ]{%
 		\includegraphics[width=0.45\textwidth]{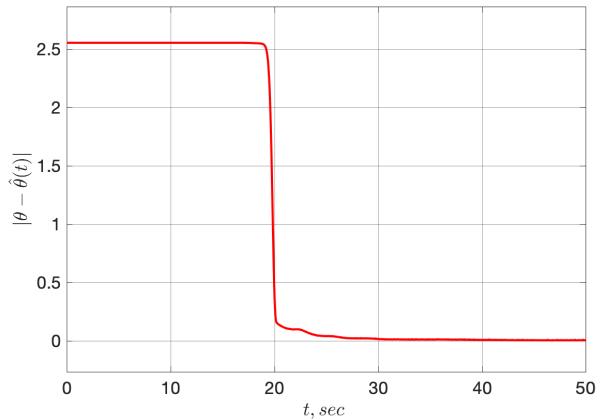}%
 		\label{fig1a}%
 	}
 	\hfill
 	\subfloat[][State observation error $x(t)-\hat x(t)$]{%
 		\includegraphics[width=0.45\textwidth]{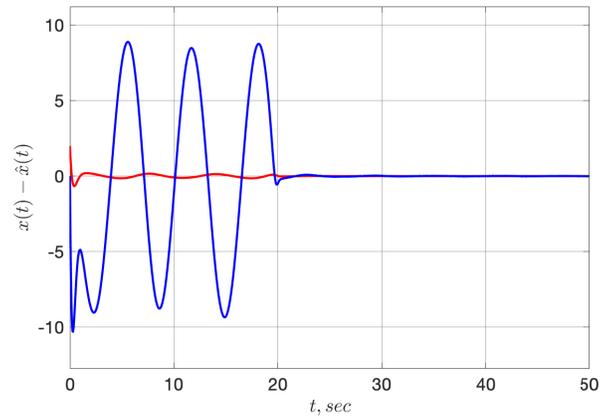}%
 		\label{fig1b}%
 	}
 
 	\caption{The estimation errors for  $K=\col(7.5, 25)$, $f=\col(-1,-2)$, $f_0=0.001$, $\alpha=100$, $\gamma=100$.}
 	\label{fig_1}
 \end{figure*}

\begin{figure*}[h]
	\vspace{0mm}
	\subfloat[][Norm of the parameter estimation error $| \bfthe-\hat\bfthe(t)|$]{%
		\includegraphics[width=0.45\textwidth]{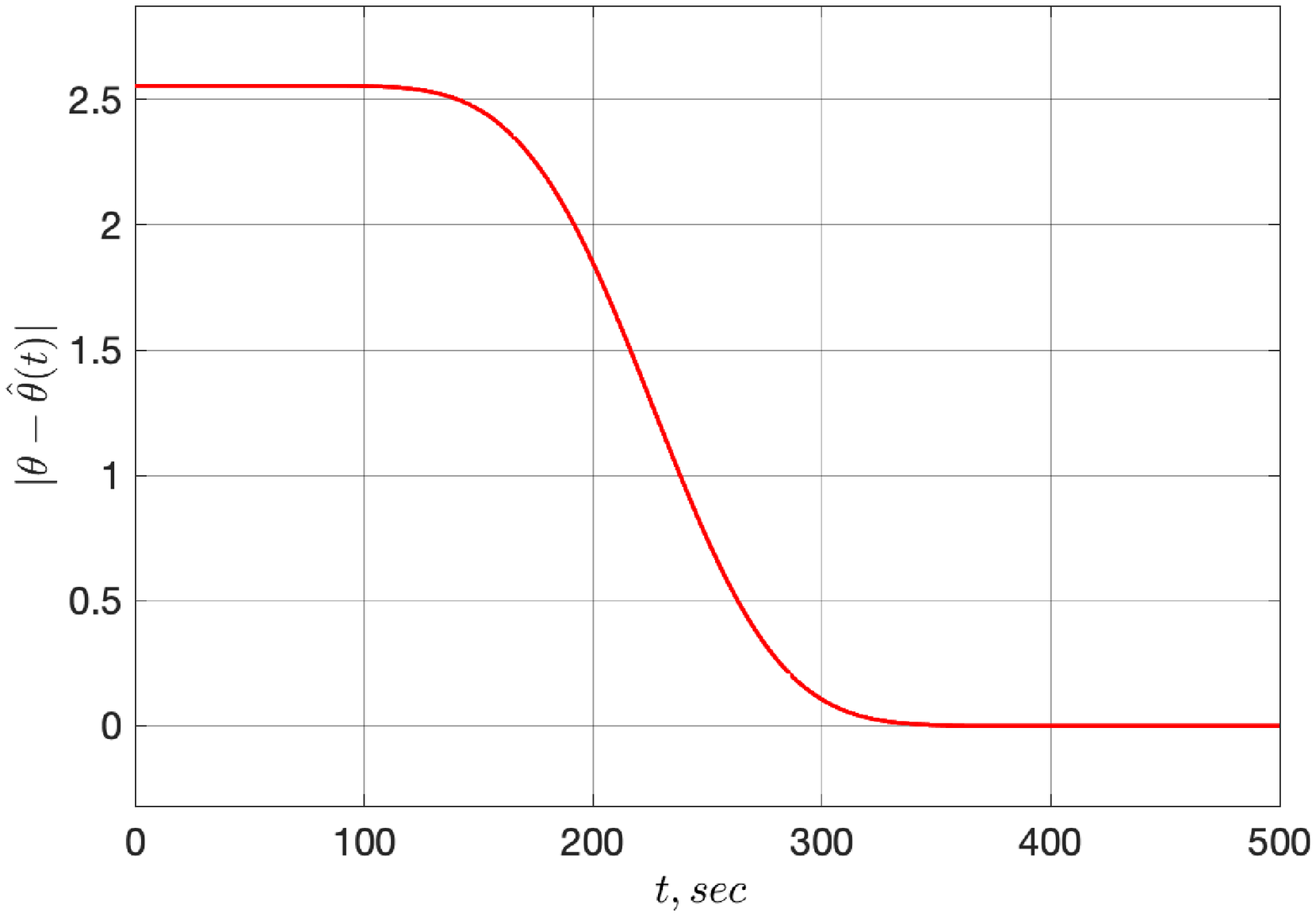}%
		\label{fig2a}%
	}
	\hfill
	\subfloat[][State observation error $x(t)-\hat x(t)$]{%
		\includegraphics[width=0.45\textwidth]{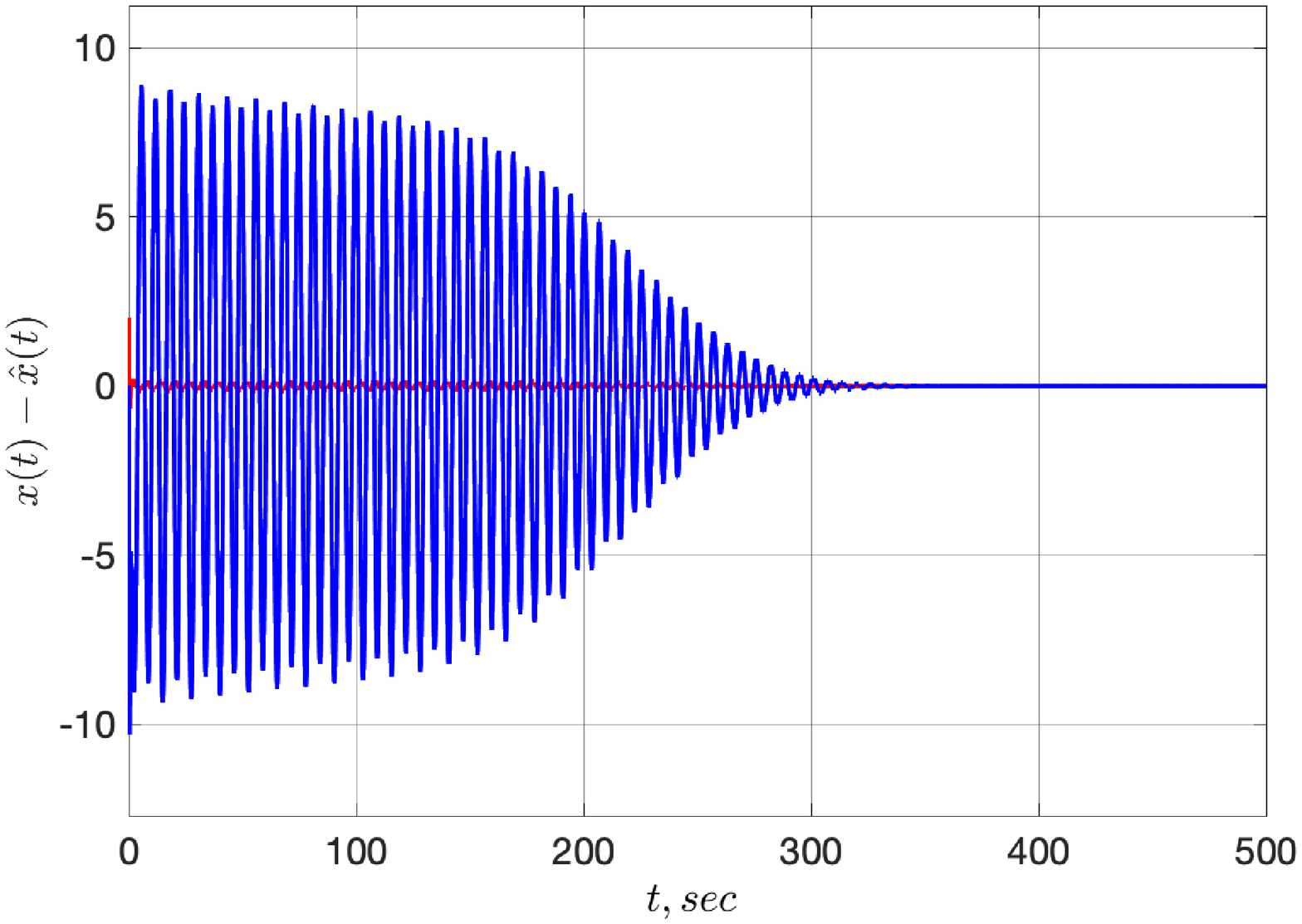}%
		\label{fig2b}%
	}
	
	\caption{The estimation errors for  $K=\col(7.5,25)$, $f=\col(-1,-2)$, $f_0=0.1$, $\alpha=1$, $\gamma=100$.}
	\label{fig_2}
\end{figure*}

  \begin{figure*}[h]
  	\center
 	\vspace{0mm}
 	\includegraphics[width=0.45\textwidth]{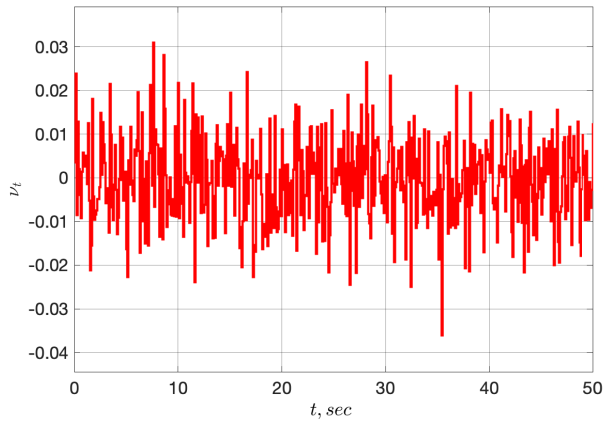}
 	\caption{Plot of the measurement noise}
 	\label{fig_3}
 \end{figure*}
 
  \begin{figure*}[h]
 	\vspace{0mm}
 	\subfloat[][Norm of the parameter estimation error $| \bfthe-\hat\bfthe(t)|$ ]{%
 		\includegraphics[width=0.45\textwidth]{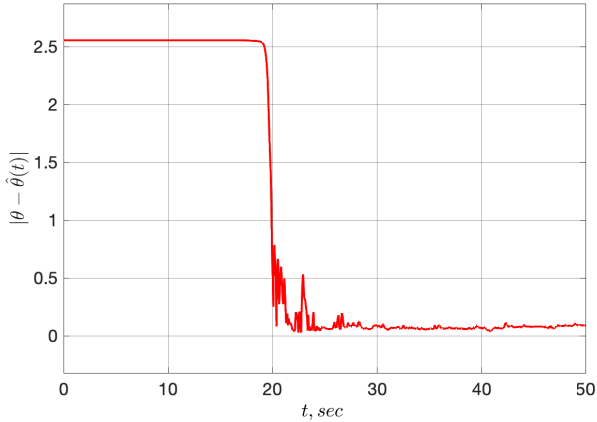}%
 		\label{fig4a}%
 	}
 	\hfill
 	\subfloat[][State observation error $x(t)-\hat x(t)$]{%
 		\includegraphics[width=0.45\textwidth]{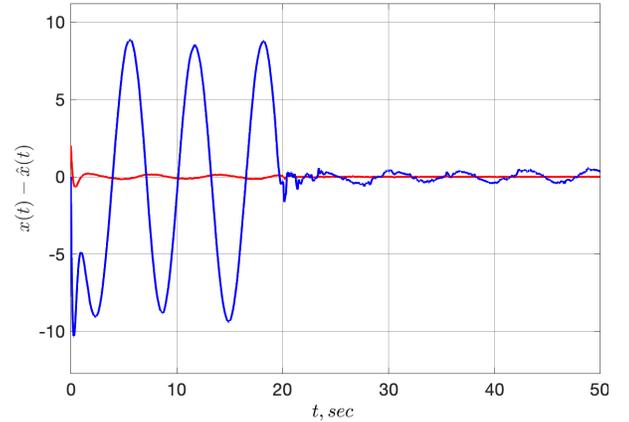}%
 		\label{fig4b}%
 	}
 	
 	\caption{The estimation errors with noisy measurement for  $K=\col(7.5,25)$, $f=\col(-1,-2)$, $f_0=0.001$, $\alpha=100$, $\gamma=100$.}
 	\label{fig_4}
 \end{figure*}
 %
 \section{Concluding Remarks}
 \lab{sec7}
 %
We have presented in the article a new general technique of state observation of LTV systems disturbed by an additive signal generated by an LTI exosystem. The system matrices and the matrix of the exosystem are assume to be unknown. The main analytical tool employed for the solution of the problem is the GPEBO technique that translates the state observation problem in a parameter estimation task, for which a regression equation---in the present case nonlinearly parameterized---is developed. To overcome the difficulty of estimating the parameters in this case we propose to use the DREM technique. In particular, we propose a new least-squares-based DREM-like estimator that allows us to solve this task.
 
Thanks to this [LS+DREM] scheme the estimation of the unknown constant parameters is achieved very rapidly and robustly with respect to measuring noise. It is also shown that the speed of convergence  may be easily regulated suitably selecting the tuning coefficients of the estimator---which consists only of four scalar parameters, whose role is clearly identified. Some simulation results reveal the efficiency of the proposed approach.

In future works we are planning to investigate the extension to the nonlinear setting of our work that is mentioned in Remark \ref{rem_nl}.
 
%

\appendix

\section{Auxiliary Lemmata}
%
In this Appendix we give two preliminary lemmata that are used in the construction of the LRE of Section \ref{sec3}. Although the first one of these  may be found in the literature \cite{ISIbook},  for the sake of completeness we also give its proof.

\begin{lemma} \em
Consider a stable LTI system
\[
\dot x(t) = Ax(t) + Bu(t)
\]
with $x(t)\in \rea^n$, driven by the exosystem \eqref{exosys}
\begali{
\nonumber
\dot w(t) &= Sw(t)\\
u(t) &= Qw(t),
\lab{exosys}
}
with $w(t)\in \rea^n$, where the spectra of $A$ and $S$ are disjoint. Then, $x(t)$ can be split as
\[
x(t) = \Pi w(t) + \tilde x(t)
\]
where $\Pi$ is the unique solution of the Sylvester equation
\[
\Pi S = A\Pi + BQ
\]
and  $\tilde x(t)$ is exponentially decaying.
 \end{lemma}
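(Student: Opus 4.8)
The plan is to establish the classical steady-state/transient decomposition underlying output regulation. First I would recall the elementary fact from linear algebra that the Sylvester operator $X \mapsto AX - XS$, acting on $\rea^{n\times n}$, is nonsingular if and only if $A$ and $S$ have no common eigenvalue; since disjointness of the spectra is assumed, the equation $\Pi S = A\Pi + BQ$ admits one and only one solution $\Pi$. This settles the well-posedness of the splitting claimed in the statement.

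Second, I would introduce the shifted coordinate $\bar x(t) := x(t) - \Pi w(t)$ and differentiate along the trajectories. Using $\dot x = Ax + Bu = Ax + BQw$ together with $\dot w = Sw$, one obtains $\dot{\bar x} = Ax + BQw - \Pi S w$; substituting $\Pi S = A\Pi + BQ$ into the last term cancels the $BQw$ contributions and leaves the homogeneous equation $\dot{\bar x} = A(x - \Pi w) = A\bar x$. Hence $\bar x(t) = e^{At}\bar x(0)$, and since the LTI system is stable by hypothesis---so that $A$ is Hurwitz and $\|e^{At}\| \le c\, e^{-\lambda t}$ for suitable $c,\lambda > 0$---the signal $\bar x(t)$ decays exponentially. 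Setting $\tilde x := \bar x$ and rearranging gives $x(t) = \Pi w(t) + \tilde x(t)$, which is the assertion.

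I do not expect a genuine obstacle: the result is textbook material, being precisely the construction of the steady-state response of a stable plant to an exogenous input (cf. \cite{ISIbook}). The only point deserving a line of care is the invertibility of the Sylvester operator, which rests entirely on the disjoint-spectra hypothesis; everything else is a one-line verification of the transformed dynamics.
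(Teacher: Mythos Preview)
Your proposal is correct and follows essentially the same route as the paper: introduce the shifted variable $\tilde x = x - \Pi w$, verify that it satisfies $\dot{\tilde x} = A\tilde x$, and conclude exponential decay from the Hurwitz assumption on $A$. The only addition in your write-up is the explicit justification of the uniqueness of $\Pi$ via the disjoint-spectra condition on the Sylvester operator, which the paper leaves implicit.
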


 \begin{proof}
Form the composition
\[
\ba{rcl}
\dot w(t) &=& Sw(t)\\
\dot x(t) &=& Ax(t) + BQw(t)\ea
\]
The change of variables
\[
 \tilde x(t) = x(t)-\Pi w(t)
\]
yields the diagonal system
\[
\ba{rcl}
\dot w(t) &=& Sw(t)\\
\dot  {\tilde x}(t) &=& A  \tilde x(t).
\ea
\]
The proof is completed recalling that $A$ is Hurwitz.
 \end{proof}

The construction introduced in the proof of Lemma 1 can also be used in the proof of the following Lemma, from which the expression (\ref{Psi2}) can be deduced.
\begin{lemma}\em
	 Consider a stable LTI system with the state $x(t)\in \rea^n$
	\[
	\dot x(t) = A_f x(t) + \bfe_n u(t),
	\]
	driven by the exosystem
	\[\ba{rcl}
	u(t) &= &Qw(t)+Me(t), \\
	\dot w(t) &=& Sw(t),\\
	\dot e(t) &=& A_Ke(t)
	\ea
	\]
	where $w(t) \in \rea^n$, the spectra of $A_f$ and $S$ are disjoint,  $A_f$ is a Hurwitz matrix having the form \eqref{af} and the matrix $A_K$ is an arbitrary Hurwitz matrix. 
	Then,
	\begite
		\item[(i)] $x(t)$ can be split as
	\[
	x(t) = {\Pi_f} w(t) + \tilde x(t)
	\]
	where ${ \Pi_f}$ is the unique solution of the Sylvester equation
	\[
	{ \Pi_f} S = A_f{ \Pi_f} + \bfe_nQ
	\]
and $(\tilde x,e)$ satisfies
\begin{equation} \label{triangular}
	\ba{rcl}
	\qmx{\dot  {\tilde x}(t) \\ \dot e(t)} &=& \qmx{A_f  & \bfe_n M \\ 0 & A_K} \qmx{\tilde x(t) \\ e(t)}.
	\ea
\end{equation}
	\item[(ii)] $x(t)$ satisfies
	\[
	\dot x(t) = A_\Gamma  x(t) + \bfe_n \varepsilon(t),
	\]
	where $A_\Gamma$ is given in \eqref{agam}, the spectra of $A_\Gamma $ and $S$ coincide, and  $\varepsilon(t)$ is exponentially decaying.	
\endite	
\end{lemma}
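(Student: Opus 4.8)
The plan is to dispatch part~(i) as a verbatim repetition of the argument used for Lemma~1, and then to obtain part~(ii) from the companion structure of $A_f$ and $A_\Gamma$. For part~(i) I would assemble the cascade
$$\dot w = Sw,\qquad \dot e = A_K e,\qquad \dot x = A_f x + \bfe_n\big(Qw + Me\big).$$
Since the spectra of $A_f$ and $S$ are disjoint, the Sylvester equation $\Pi_f S = A_f\Pi_f + \bfe_n Q$ has a unique solution $\Pi_f$. Setting $\tilde x := x - \Pi_f w$ and differentiating, the terms proportional to $w$ cancel by the Sylvester identity, leaving $\dot{\tilde x} = A_f\tilde x + \bfe_n M e$; appending the untouched equation for $e$ produces exactly the block-triangular system \eqref{triangular}. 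Since $A_f$ and $A_K$ are Hurwitz, so is this triangular generator, hence $(\tilde x,e)$---and in particular $\tilde x$---decays exponentially.

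For part~(ii) the pivotal observation is that, by \eqref{af} and \eqref{agam}, the matrices $A_f$ and $A_\Gamma$ share their first $n-1$ rows and differ only in the last one, so $A_\Gamma - A_f = \bfe_n(\Gamma - f)^\top$. Consequently
$$\dot x = A_f x + \bfe_n u = A_\Gamma x + \bfe_n\big[\,u - (\Gamma - f)^\top x\,\big],$$
which is the asserted form with $\varepsilon := u - (\Gamma - f)^\top x$. Inserting $u = Qw + Me$ and the decomposition $x = \Pi_f w + \tilde x$ furnished by part~(i) gives $\varepsilon = \big[\,Q - (\Gamma - f)^\top \Pi_f\,\big]w + Me - (\Gamma - f)^\top \tilde x$; the last two terms are exponentially decaying, so the whole matter reduces to proving that the coefficient of $w$ vanishes, i.e. $(\Gamma - f)^\top \Pi_f = Q$.

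To establish this identity I would solve the Sylvester equation explicitly, exploiting the companion form. Writing $\Pi_f$ by rows as $\Pi_f = \col(\pi_1^\top,\dots,\pi_n^\top)$, the first $n-1$ rows of $\Pi_f S = A_f\Pi_f + \bfe_n Q$ read $\pi_i^\top S = \pi_{i+1}^\top$, so $\pi_i^\top = \pi_1^\top S^{\,i-1}$, and the last row then forces $\pi_1^\top p_f(S) = Q$, where $p_f(\lambda) = \lambda^n - \sum_{j=1}^n f_j\lambda^{\,j-1}$ is the characteristic polynomial of $A_f$ and $p_f(S)$ is invertible precisely because the spectra of $A_f$ and $S$ are disjoint. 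Since $\Gamma$ is chosen so that the characteristic polynomial $p_\Gamma$ of $A_\Gamma$ equals that of $S$, Cayley--Hamilton yields $p_\Gamma(S)=0$, i.e. $\sum_j\Gamma_j S^{\,j-1} = S^n$, whereas $\sum_j f_j S^{\,j-1} = S^n - p_f(S)$; subtracting gives $\sum_j(\Gamma_j - f_j)S^{\,j-1} = p_f(S)$. Hence $(\Gamma - f)^\top\Pi_f = \pi_1^\top\sum_j(\Gamma_j - f_j)S^{\,j-1} = \pi_1^\top p_f(S) = Q$, which kills the $w$-term and proves part~(ii). The only delicate step is this last computation: one must read off $p_f$ correctly from the bottom row of the companion matrix \eqref{af}, and one must read ``the spectra of $A_\Gamma$ and $S$ coincide'' as the statement that $A_\Gamma$ is the unique companion realization of the characteristic polynomial of $S$---the interpretation used in the numerical example, and the one that makes the appeal to Cayley--Hamilton legitimate. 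Everything else is the standard Sylvester-based change of coordinates already exploited in the proof of Lemma~1.
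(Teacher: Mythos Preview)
Your argument is correct. Part~(i) is identical to the paper's. For part~(ii) you take a genuinely different, and rather clean, route: the paper instead computes $x_{i+1}=\dot x_i$ from the companion form, differentiates $x_1=\bfe_1^\top\Pi_f w+\bfe_1^\top\tilde x$ up to order $n$, and then substitutes the Cayley--Hamilton identity $S^n=-\sum_i\gamma_iS^{\,n-i}$ into $\dot x_n=h^\top S^n w+\bfe_1^\top[\tilde x]^{(n)}$ to read off $\Gamma=(-\gamma_n,\dots,-\gamma_1)$ and $\varepsilon(t)=\bfe_1^\top\bigl[[\tilde x]^{(n)}+\sum_i\gamma_i[\tilde x]^{(n-i)}\bigr]$. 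Your approach starts from the algebraic identity $A_\Gamma-A_f=\bfe_n(\Gamma-f)^\top$, writes $\varepsilon=u-(\Gamma-f)^\top x$ directly, and then kills the $w$-coefficient by solving the Sylvester equation row by row; the two $\varepsilon$'s are of course the same signal. What the paper's derivation buys is an explicit representation $\varepsilon(t)=h_\varepsilon^\top\xi(t)$ with $\xi=(\tilde x,e)$ governed by the Hurwitz matrix $F_c$ of \eqref{triangular}, which is reused verbatim in the proof of Proposition~\ref{pro1}; your formula $\varepsilon=Me-(\Gamma-f)^\top\tilde x$ gives the same thing more transparently. What your approach buys is that the crux---the vanishing of $[Q-(\Gamma-f)^\top\Pi_f]$---is isolated as a single algebraic fact and proved by a neat manipulation of the companion Sylvester rows, without tracking $n$ derivatives.
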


 \begin{proof}
	With Lemma 1 in mind, consider the composition
	\[
	\ba{rcl}
	\dot w(t) &=& Sw(t)\\
	\dot e(t)&=&A_Ke(t)\\
	\dot x(t) &=& A_fx(t) + \bfe_nQw(t)+\bfe_n Me(t).\ea
	\]
	The change of variables
	\[
	\tilde x(t) = x(t)-\Pi w(t)
	\]
	proves (\ref{triangular}).

In order to prove claim (ii), observe that
	\begalis{
		x_1(t) =\bfe^\top_1x(t) = h^\top w(t) +  \bfe^\top_1 \tilde x (t)
	}
	where $h^\top= \bfe^\top_1  \Pi_f$.	Moreover, in view of the special form of $A_f$, we have $x_{i+1}(t)=\dot x_i(t)$ for $i=1,\ldots, n-1$.

As a consequence
	\begalis{
		x_1&=h^\top w(t) +  \bfe^\top_1  \tilde x(t),\\
		x_2&=h^\top S w(t) +  \bfe^\top_1  [\tilde x(t)]^{(1)}\\
		x_3&=h^\top S^2 w(t) +  \bfe^\top_1  [\tilde x(t)]^{(2)}\\
		&\quad\vdots\nonumber \\
		x_n&=h^\top S^{n-1} w(t) +  \bfe^\top_1  [\tilde x(t)]^{(n-1)}
	}
	and, finally,
	\beal{
		\lab{chielldot}
		\dot x_n=h^\top S^n w(t) +  \bfe^\top_1  [\tilde x(t)]^{(n)}\,.
	}
Consider now the charactersistic polynomial of matrix $S$:
	\[
	\gamma(\lambda)=\det(\lambda I-S)=\lambda^n+\gamma_1\lambda^{n-1}+\gamma_2\lambda^{n-2}+\cdots +\gamma_n\lambda^0.
	\]
and recall that, thanks to the Cayley--Hamilton theorem,
	\[
	S^n=-\gamma_1S^{n-1}-\gamma_2S^{n-2}-\cdots -\gamma_n S^0.
	\]
	
	Substituting the latter into \eqref{chielldot} yields
	\begalis{
		\dot x_n&=h^\top (-\gamma_1S^{n-1}-\gamma_2S^{n-2}-\cdots -\gamma_n S^0) w(t) +  \bfe^\top_1  [\tilde x(t)]^{(n)}\nonumber\\
		&=-\gamma_nx_1-\gamma_{n-1}x_2-\cdots-\gamma_1x_n+ \bfe^\top_1  \Bigl[[\tilde x(t)]^{(n)}+\sum_{i=1}^n \gamma_i[\tilde x(t)]^{(n-i)}\Bigr]\nonumber\\
		&=-\gamma_nx_1-\gamma_{n-1}x_2-\cdots-\gamma_1x_n+\varepsilon(t),
	}
	where
	\beal{
	\lab{vareps}
		\varepsilon(t)&=\bfe^\top_1  \Bigl[[\tilde x(t)]^{(n)}+\sum_{i=1}^n \gamma_i[\tilde x(t)]^{(n-i)}\Bigr].
	}
	
In this way, we have shown that
	\begalis{
		\dot x(t)&=A_\Gamma x(t)+\bfe_n \varepsilon(t),
	}
	in which $\Gamma =\qmx{-\gamma_n & -\gamma_{n-1} & ... & -\gamma_1}$.

To complete the proof of (ii) it remains to check that $\varepsilon(t)$ is exponentially decaying. But this is a straightforward consequence of (\ref{triangular}), which we rewrite as
\begequ
\lab{dotxi}
\dot \xi(t)= F_{\rm c}\xi(t)
\endequ
with 
\[
\xi(t) := \begmat{\tilde x(t) \\e(t)},\;F_{\rm c}:=\qmx{A_f  & \bfe_n M \\ 0 & A_K}, 
\]
and $F_{\rm c}$ a Hurwitz matrix. Now, $\tilde x(t)$ can expressed as $\tilde x(t) = H_{\rm c}\xi(t)$ with  $H_{\rm c}:= \begmat{I_n &0_{n \times n}}$.
Thus
 \[
[\tilde x(t)]^{(i)} =  H_{\rm c}F_{\rm c}^i\xi(t).
\]
Replacing the identity above in \eqref{vareps} yields
\begequ
\lab{vareps1}
		\varepsilon(t)= h^\top_\varepsilon \xi(t),
\endequ
where we defined the vector
\begalis{
	h_\varepsilon:= \bfe^\top_1H_{\rm c} \Bigl[F_{\rm c}^n+\sum_{i=1}^{n} \gamma_iF_{\rm c}^{n-i}\Bigr].
	}
Since $\xi(t)$ is exponentially decaying, claim (ii) is proven.
\end{proof}

\begrem
\label{rem_nl}
	It is worth stressing that, taking advantage of an important result of \cite{MPI},  the property indicated in Lemma 2 can be extended to the case in which a stable $d$-dimensional system 
	$$\dot x(t) = A_f x(t) + \bfe_d u(t)$$ 
	is driven by a nonlinear exosystem of the form
	
	\[\begin{array}{rcl}
		u(t) &= &q(w(t))+Me(t), \\
		\dot w(t) &=& s(w(t)),\\
		\dot e(t) &=& A_Ke(t)
	\end{array}
	\]
	
	Also in this case, in fact, it is possible to show that - if $d$ and $A_f$ are appropriately chosen - $x(t)$ can be seen as the solution of a  system $\dot  x = \hat s(x) + \bfe_d\varepsilon$ in which  $\hat s(\cdot)$ is a suitable ``copy" of $s(\cdot)$ and $\varepsilon$ is asymptotically decaying to 0. 
\endrem
%
\section{A Least-Squares-Based DREM Parameter Estimator}
%
In this appendix we propose a new parameter estimator for separable, NLPRE which is a combination of a least-squares search and the mixing procedure of DREM. The least-squares search is used to construct the extended regressor---this in contrast with previous versions of DREM where this regressor is obtained applying a bank of linear operators to the original regressor equation, see   \cite{ORTetaltac} for a recent overview of the DREM procedure. There are two significant advantages of the new estimator with respect to the previous DREM algorithms. First, we avoid the step of selecting the bank of linear operators, which significantly affect the performance of the estimator and whose choice is far from clear. Second, it is widely recognized that the transient performance of the least-squares search significantly outperforms the gradient-based one.    

The following standing assumptions are imposed throughout the appendix.\\  	

\noindent {\bf Assumption A1} [A NLPRE] The existence of a separable, NLPRE
\begequ
\lab{nlpre}
\bfy(t)=\bfome(t) \calg(\bfthe),
\endequ
where the signals $\bfy(t) \in \rea^m$ and $\bfome(t) \in \rea^{m \times p}$ are measurable, the mapping $\calg:\rea^q \to \rea^p,\;q \leq p$ is known but the constant parameters $\bfthe \in \rea^q$ are unknown.\\
	
\noindent {\bf Assumption A2} [Monotonicity] There exists a matrix  $\bfq\in\mathbb{R}^{q\times p}$ such that mapping $\calg(\bfthe)$ verifies the linear matrix inequality
\begali{
	\lab{monpro-S}
	\bfq \nabla\calg(\bfthe) + \nabla^\top  \calg(\bfthe)\bfq^\top  \geq \rho I_q >  0,\;\forall \; \bfthe \in \rea^q,
}
for some $\rho \in \rea_{>0}$. Consequently  \cite{DEM,PAVetal}, The mapping $\bfq\calg(\theta)$ is {\em strongly monotone}, that is, 
\begali{
\lab{monpro}
(a-b)^\top  &\left[\bfq \calg(a) -\bfq \calg(b)\right] \geq \rho|a-b|^2 >  0,\;\forall \; a,b \in \rea^q,\;a \neq b.
}

\noindent {\bf Assumption A3} [Interval Excitation] 	\cite[Definition 3.1]{TAObook}	The regressor matrix $\Omega$ of the NLPRE \eqref{lre} is  interval exciting (IE). That is, there exists constants $C_c>0$ and $t_c\geq t_0>0$ such that
	\begalis{
		&\int_{t_0}^{t_0+t_c} \bfome^\top(s) \bfome(s)  ds \ge C_c I_p.
	}
	
	\begin{proposition}\em
		\lab{pro3}
		Consider the NLPRE (\ref{nlpre}) with  $\calg(\bfthe)$ satisfying  {\bf Assumption A2} and $\bfome$ verifying {\bf Assumption  A3}. Define the LS+DREM interlaced estimator
		\begsubequ
		\lab{intestt}
		\begali{
			\lab{thegt}
			\dot{\hat \bfthe}_g(t) & ={\alpha} F(t) \bfome^\top(t)  (\bfy(t)-\bfome(t) \hat\bfthe_g(t)),\; \hat\bfthe_g(0)=:\bfthe_{g0} \in \rea^p\\
			\lab{phit}
			\dot {F}(t)& =  -{\alpha} F(t) \Omega^\top(t)  \Omega(t)  F(t),\; F(0)={1 \over f_0} I_p \\
			\lab{thet}
			\dot{\hat \bfthe}(t) & =\gamma \bfq \Delta(t) [\caly(t) -\Delta(t) \calg(\hat\bfthe(t)) ],\; \hat\bfthe(0)=:\bfthe_0 \in \rea^q,
		}
		\endsubequ
		with tuning gains $f_0>0$ , ${\alpha>0}$ and $\gamma >0$,  and we defined
		\begsubequ
		\lab{aydelt}
		\begali{
			\lab{delt}
			\Delta(t) & :=\det\{I_p-f_0F(t)\}\\
			\lab{yt}
			\caly(t) & := \adj\{I_p- f_0F(t)\} (\hat\bfthe_g(t) -  f_0F(t) \bfthe_{g0}),
		}
		\endsubequ
		where $ \adj\{\cdot\}$ denotes the adjugate matrix. Define the parameter error $\tilthe(t):=\hat \bfthe(t)-\bfthe$. Then, for all initial conditions $\bfthe_{g0} \in \rea^p$ and $\bfthe_{0} \in \rea^q$, we have that
		\begequ
		\lab{parcon}
		\lim_{t \to \infty}\tilthe(t)=0,\;(exp),
		\endequ
		with all signals \textit{bounded}.
	\end{proposition}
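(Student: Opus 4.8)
The plan is to treat the three interlaced blocks of \eqref{intestt} in order, the point being that the least-squares block, together with the adjugate mixing \eqref{aydelt}, produces an \emph{exact} scalar-factored regression for $\calg(\bfthe)$, after which strong monotonicity and interval excitation finish the argument. First I would look at the least-squares block \eqref{thegt}--\eqref{phit}. Set $\tilde\bfthe_g:=\hat\bfthe_g-\calg(\bfthe)$; since $\bfy=\bfome\calg(\bfthe)$ this obeys $\dot{\tilde\bfthe}_g=-\alpha F\bfome^\top\bfome\,\tilde\bfthe_g$. To control $F$ I would not integrate \eqref{phit} directly but instead check that $P(t):=f_0 I_p+\alpha\int_0^t\bfome^\top(s)\bfome(s)\,ds$ is $C^1$, positive definite, satisfies $\dot P=\alpha\bfome^\top\bfome$, and that $t\mapsto P^{-1}(t)$ solves \eqref{phit}; uniqueness then gives $F(t)=P^{-1}(t)$, so $F$ exists on $[0,\infty)$ with $0\prec F(t)\preceq\tfrac1{f_0}I_p$. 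The crucial algebraic fact is that $F^{-1}\tilde\bfthe_g$ is constant, because $\tfrac{d}{dt}(F^{-1}\tilde\bfthe_g)=(-F^{-1}\dot F F^{-1})\tilde\bfthe_g+F^{-1}\dot{\tilde\bfthe}_g=\alpha\bfome^\top\bfome\,\tilde\bfthe_g-\alpha\bfome^\top\bfome\,\tilde\bfthe_g=0$; hence $F^{-1}(t)\tilde\bfthe_g(t)\equiv f_0(\bfthe_{g0}-\calg(\bfthe))$, i.e.
\[
(I_p-f_0 F(t))\,\calg(\bfthe)=\hat\bfthe_g(t)-f_0F(t)\,\bfthe_{g0},
\]
which in particular shows $\hat\bfthe_g$ is bounded.

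Next I would left-multiply this identity by $\adj\{I_p-f_0F(t)\}$ and use $\adj(M)M=\det(M)I_p$ to obtain the scalar-factored regression $\caly(t)=\Delta(t)\,\calg(\bfthe)$, with $\Delta,\caly$ exactly as in \eqref{aydelt}. Since $0\prec f_0F(t)\preceq I_p$, the eigenvalues of $I_p-f_0F(t)$ lie in $[0,1)$, so $0\le\Delta(t)<1$ for all $t$. Invoking {\bf Assumption A3}, for $t\ge t_0+t_c$ one has $P(t)\succeq(f_0+\alpha C_c)I_p$, hence $f_0F(t)\preceq\tfrac{f_0}{f_0+\alpha C_c}I_p$ and therefore
\[
\Delta(t)\ \ge\ \delta_\star:=\Big(\tfrac{\alpha C_c}{f_0+\alpha C_c}\Big)^{p}\ >\ 0,\qquad t\ge t_0+t_c .
\]
I expect this transfer of excitation --- from the IE of $\bfome$, through the LS covariance $F$, and then through a determinant into a uniform positive lower bound on $\Delta$ --- to be the main obstacle; the rest is algebra and a standard Lyapunov estimate.

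Finally, substituting $\caly=\Delta\calg(\bfthe)$ into \eqref{thet} gives, with $\tilthe:=\hat\bfthe-\bfthe$,
\[
\dot{\tilthe}=-\gamma\,\Delta^2(t)\,\bfq\big[\calg(\hat\bfthe)-\calg(\bfthe)\big].
\]
With $V:=\tfrac12|\tilthe|^2$ and the strong monotonicity \eqref{monpro} applied with $a=\hat\bfthe$, $b=\bfthe$,
\[
\dot V=-\gamma\Delta^2(t)\,\tilthe^\top\bfq\big[\calg(\hat\bfthe)-\calg(\bfthe)\big]\ \le\ -2\gamma\rho\,\Delta^2(t)\,V\ \le\ 0 .
\]
Because this bound holds for every $\hat\bfthe\in\rea^q$, $V$ is non-increasing, so $\hat\bfthe$ is bounded (hence, with the first step, all signals are bounded) and no finite escape time can occur even though $\calg$ is merely polynomial rather than globally Lipschitz. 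Then for $t\ge t_0+t_c$ the lower bound on $\Delta$ gives $\dot V\le-2\gamma\rho\,\delta_\star^2\,V$, so $V$ --- and with it $\tilthe$ --- decays exponentially; together with $V(t)\le V(0)$ on $[0,t_0+t_c]$ this yields \eqref{parcon} with all signals bounded, completing the proof.
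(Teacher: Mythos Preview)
Your proof is correct and follows essentially the same route as the paper's: derive the exact scalar-factored regression $\caly=\Delta\,\calg(\bfthe)$ from the invariance of $F^{-1}\tilde\bfthe_g$, then combine strong monotonicity with a lower bound on $\Delta$ to get exponential decay of $|\tilthe|^2$. Where the paper cites \cite[Lemma 3.5]{TAObook} for the implication ``$\bfome$ IE $\Rightarrow$ $\Delta$ eventually bounded away from zero'', you instead supply the explicit formula $F=P^{-1}$ with $P(t)=f_0I_p+\alpha\int_0^t\bfome^\top\bfome$ and the concrete bound $\Delta(t)\ge\big(\tfrac{\alpha C_c}{f_0+\alpha C_c}\big)^p$ for $t\ge t_0+t_c$, which is a welcome sharpening but not a different strategy.
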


 \begin{proof}
With some abuse of notation, define the signal
$$
\tilde \calg(t):=  \hat\bfthe_g(t) - \calg(\bfthe),
$$
whose derivative is given by
\begalis{
\dot{\tilde \calg}(t)&=-\alpha F(t)\bfome^\top (t) \bfome (t)\tilde\calg(t),
}
where we used  \eqref{nlpre} and \eqref{thegt}. Now, from the fact that \cite[Theorem 4.3.4]{IOASUNbook},
$$
{d \over dt}(F^{-1}(t)\tilde \calg(t))  =0
$$
we have 
$$
\tilde \calg(t)  =f_0F(t)\tilde\calg(0),
$$
which may be rewritten as the extended NLPRE
\begali{
\lab{keyide}
(I_p-f_0F(t))\calg(\bfthe) &=\hat \bfthe_g(t) - f_0F (t)\bfthe_{g0}\,,
}
Following the DREM procedure we multiply \eqref{keyide} by   $\adj\{I_p-f_0F(t)\}$ to get the following  NLPRE
\begequ
\lab{ydel}
\caly(t)= \Delta(t)\calg(\bfthe),
\endequ
where we used \eqref{delt} and \eqref{yt}. Replacing \eqref{ydel} in \eqref{thet} we get
\begalis{
\dot{\hat \bfthe} (t)& =-{ \gamma \Delta^2(t)} \bfq[ \calg(\hat\bfthe(t)) - \calg(\bfthe)] .
}
To analyze the stability of the latter system define the Lyapunov function candidate 
$$
U(\tilde \bfthe) := \frac{1}{2\gamma} |\tilde \bfthe|^2.
$$
Computing its time derivative yields
\begalis{
\dot U (t) & =  - \Delta^2[ \hat \bfthe - \bfthe]^\top \bfq {[ \calg(\hat \bfthe(t) ) - \calg(\bfthe)]} \\
& \leq  -  \Delta^2(t) \rho | \tilde \bfthe(t)|^2  \\
& =  - 2\rho \gamma{\Delta}^2(t) U(t),
}
where we invoked the {\bf Assumption A.2} of strong monotonicity property \eqref{monpro} of $\bfq \calg(\bfthe)$ to get the first bound. To complete the proof, we invoke the Comparison Lemma \cite[Lemma 3.4]{KHAbook} that yields the bound
$$
U(t+t_c) \leq e^{-2 \rho\gamma \int_t^{t+t_c} \Delta^2(s)ds}U(t),
$$
which ensures $\lim_{t \to \infty}\tilthe(t)=0\;(exp)$ if  $\Delta(t)$ is PE. The latter condition follows from the assumption that $\bfome(t)$ is IE and \cite[Lemma 3.5]{TAObook}, which ensures the following
\begequ
\lab{taolem}
\bfome(t) \in IE\;\Rightarrow\;\Delta(t) >0,\;\forall t \geq t_0+t_c.
\endequ
Consequently, $\Delta(t)$ is PE.
\end{proof}

\begrem
	
Proposed estimator \eqref{phit} may be compared with the adaptive estimation algorithm presented in \cite{PYRHYP}, where the problem of frequency estimation of a periodical signals corrupted by a noise was considered. Indeed, for the case $m=p=1$ with $F\in\rea$, $\Omega\in\rea$ the algorithm \eqref{phit} becomes
$$
\dot F = -\alpha \Omega^2 F^2.
$$

In \cite{PYRHYP} a similar tuning rule was deduced (see equation (45)). But in \cite{PYRHYP} was derived with other (heuristic) arguments, from an idea of iterative decreasing an adaptation gain at special time instants \cite{EFITAC}. Contineous extention of the iterative approach \cite{EFITAC} yielded  \cite{PYRHYP} which is a very particular scenario of the general approach \eqref{phit} proposed in this paper.
\endrem

\end{document}